\tikzstyle{block} = [draw, rectangle]
\tikzstyle{sum}   = [draw, circle]
\tikzstyle{point} = [coordinate]
\theoremstyle{definition}
\newtheorem{defn}{Definition}
\newtheorem{theorem}{Theorem}
\newtheorem{lemma}{Lemma}[theorem]
\newtheorem{prop}{Proposition}
\newtheorem{corollary}{Corollary}[theorem]
\def\R{\mathbf{\Phi}_x}
\def\M{\mathbf{\Phi}_u}
\def\Rc{\mathbf{R_c}}
\def\Mc{\mathbf{M_c}}
\def\Rhat{\hat{\mathbf{\Phi}}_x}
\def\Mhat{\hat{\mathbf{\Phi}}_u}
\def\Rtilde{\tilde{\mathbf{\Phi}}_x}
\def\Mtilde{\tilde{\mathbf{\Phi}}_u}
\def\PhiStack{\begin{bmatrix}\R \\ \M\end{bmatrix}}
\def\PhicStack{\begin{bmatrix}\Rc \\ \Mc\end{bmatrix}}
\def\ZAB{\begin{bmatrix}zI-A && -B\end{bmatrix}}
\def\Kone{\mathbf{K_1}}
\def\D{\mathbf{\Delta}}
\def\Dc{\mathbf{\Delta_c}}
\def\HTwo{\mathcal{H}_{2}}
\def\LOne{\mathcal{L}_{1}}
\def\K{\mathbf{K}}
\newif\ifshowWriterComment
\newcommand\writercomment[3]{\expandafter\newcommand\csname #2\endcsname[1]{\ifshowWriterComment{\color{#3} (#1: ##1)}\fi}}
\title{\LARGE \bf
	Separating Controller Design from Closed-Loop Design:\\
	A New Perspective on System-Level Controller Synthesis
}
\author{Jing Shuang (Lisa) Li and Dimitar Ho
	\thanks{Authors are with the Department of Computing and Mathematical Sciences, California Institute of Technology.
		{\tt\small jsli@caltech.edu},
		{\tt\small dho@caltech.edu}
	}%
}
{\tiny }
\begin{document}	
	\maketitle
	\thispagestyle{empty}
	\pagestyle{empty}
	
	\begin{abstract}
	We show that given a desired closed-loop response for a system, there exists an affine subspace of controllers that achieve this response. By leveraging the existence of this subspace, we are able to separate controller design from closed-loop design by first synthesizing the desired closed-loop response and then synthesizing a controller that achieves the desired response. This is a useful extension to the recently introduced \textit{System Level Synthesis} framework, in which the controller and closed-loop response are jointly synthesized and we cannot enforce controller-specific constraints without subjecting the closed-loop map to the same constraints. 
	
	We demonstrate the importance of separating controller design from closed-loop design with an example in which communication delay and locality constraints cause standard SLS to be infeasible. Using our new two-step procedure, we are able to synthesize a controller that obeys the constraints while only incurring a 3\% increase in LQR cost compared to the optimal LQR controller.
\end{abstract}
	
	\section{INTRODUCTION} \label{sec:introduction}

Large-scale distributed cyberphysical systems (e.g. power grids, intelligent transportation systems) are composed of numerous local controllers that exchange local information via some communication network. The information that each local controller is able to obtain is limited by properties of the communication network, e.g. delay. It is a challenge to scalably synthesize optimal local controllers subject to the limitations of the communication network \cite{Ho1971,Mahajan2012, Rotkowitz2005, Bamieh2002, Bamieh2005, Nayyar2013}.

The recently developed \textit{System Level Synthesis} (SLS) framework addresses this challenge by shifting the optimization from the space of available controllers to the space of achievable system closed-loop maps \cite{Anderson2019}. In doing so, it allows the problem to be decomposed into sub-problems to be solved in parallel, resulting in a synthesis procedure with $O(1)$ complexity \cite{Wang2018}.

In the original SLS framework, the closed-loop maps themselves are used to implement the controller, and thus any constraints applied to the controller are directly enforced on the closed-loop response as well. However, the abovementioned communication limitations motivate constraints on \textit{controllers}, not closed-loop maps; by applying these constraints on the closed-loop response, we unnecessarily limit the space over which we can search for solutions.

Standard SLS is infeasible under excessive communication constraints. \cite{Matni2018} addresses this by searching over approximate closed-loop maps instead of exact closed-loop maps; constraints are imposed on the approximate closed-loop maps. We propose an alternative two-step procedure, as follows:
\begin{enumerate}
	\item Synthesize the desired closed-loop response, subject to closed-loop constraints. This can be done using SLS or any other linear synthesis method (Proposition \ref{prop:all_linear_ctrllers})	
	\item Synthesize the controller, subject to controller constraints
\end{enumerate}

To fully separate closed-loop map constraints from controller constraints, we require a controller that is implemented using transfer matrices \textit{other} than the closed-loop maps. We define the space of such matrices in Theorem \ref{thm:cl_impl_matrices} and give conditions for their existence in Lemma \ref{lemma:existence_solutions}.

The main contribution of this paper is to introduce the controller synthesis step of the design procedure and demonstrate its importance. We show that our proposed two-step synthesis allows us to design low-cost, distributed controllers that were unavailable to us in the previous framework. Additionally, the controller synthesis problem can be decomposed into parallelizable sub-problems, much like the original SLS problem.
	\section{PRELIMINARIES} \label{sec:preliminaries}
\subsection{Notation}
We use italicized lower-case letters (e.g. $x_t$) to denote vectors in the time domain. We use italicized upper-case letters (e.g. $A$) to denote constant matrices. We use superscripts to denote individual matrix elements (e.g. $A^{i,j}$). 

We use boldface lower and upper case letters (eg. $\mathbf{x}$, $\R$, $\Rc$) to denote signals and transfer matrices in the frequency domain. We use $R_c(k)$ to denote the $k$th spectral component of $\Rc$, i.e. $\Rc(z) = \sum_{k=0}^{\infty} R_c(k)z^{-k}$.

In this paper, we will restrict ourselves to strictly proper finite-impulse-response (FIR) transfer matrices, i.e. $\Rc(z) = \sum_{k=1}^{T} R_c(k)z^{-k}$, $T \in \mathbb{Z}_+$.

\subsection{System setup}
We use the same setup as in (2.1) of \cite{Anderson2019}:

\begin{equation} \label{eq:dynamics}
x_{t+1} = Ax_t + Bu_t + w_t
\end{equation} 

where $x$, $w$ $\in \mathbb{R}^n$ and $u$ $\in \mathbb{R}^m$. In this paper we focus on the time-invariant case (i.e. $A$, $B$ have no time-dependence) with state feedback.

$\R$ and $\M$ are the closed-loop maps from $w$ to $x$ and $u$, with FIR time horizon $T$:
\begin{equation} \label{eq:clmaps_freq}
\begin{bmatrix}\mathbf{x} \\ \mathbf{u} \end{bmatrix} = \begin{bmatrix}\R \\ \M \end{bmatrix}\mathbf{w}    
\end{equation}

\subsection{Controller implementation}

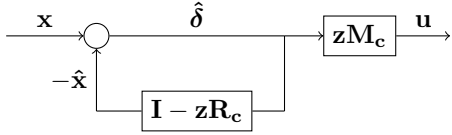
\begin{figure} 
\centering
\begin{tikzpicture}
    \node [point, name=input] {};
    \node [sum, right of=input, node distance=1.2cm] (sum) {};
    \node [point, right of=sum, node distance=2.5cm] (pt1) {};
    \node [block, right of=sum, node distance=3.5cm] (zM) {$\mathbf{zM_c}$};
    
    \node [point, right of=zM, node distance=1.2cm, name=output] {};

    \draw [->] (input) -- node[above] {$\mathbf{x}$} (sum);
    \draw [-] (sum) -- node[above, name=delta] {$\boldsymbol{\hat{\delta}}$} (pt1);
    \draw [->] (pt1) -- node[] {} (zM);
    \draw [->] (zM) -- node[above] {$\mathbf{u}$} (output);
    
    \node [point, below of=sum, node distance=1cm] (pt2) {};
    \node [point, below of=pt1, node distance=1cm] (pt3) {};
    \node [block, below of=delta, node distance=1.29cm] (IzR) {$\mathbf{I-zR_c}$};
    \draw [-] (pt1) -- node {} (pt3); 
    \draw [-] (pt3) -- node {} (IzR);
    \draw [-] (IzR) -- node {} (pt2);
    \draw [->] (pt2) -- node[left] {$\mathbf{-\hat{x}}$} (sum);
\end{tikzpicture}
\caption{Implementation of state feedback controller}
\label{fig:blockdiag}
\end{figure}
Fig. \ref{fig:blockdiag} shows the controller implementation. $\Rc$ and $\Mc$ are the implementation matrices, with order (i.e. FIR time horizon) $T_c$.

The controller includes two internal signals; $\mathbf{\hat{x}}$ and $\boldsymbol{\hat{\delta}}$. The equations describing the controller are
\begin{subequations} \label{eq:impl_eqns_time}
\begin{equation} \label{eq:impl_delta_time}
    \hat{\delta}_t = x_t - \sum_{k=2}^{T_c} R_c(k) \hat{\delta}_{t-k+1}
\end{equation}
\begin{equation} \label{eq:impl_u_time}
    u_t = \sum_{k=1}^{T_c} M_c(k) \hat{\delta}_{t-k+1}
\end{equation}
\end{subequations}
where (\ref{eq:impl_delta_time}) assumes that $R_c(1)$ is the identity. For a more detailed derivation, refer to \cite{Ho2019}. The corresponding frequency-domain equations are
\begin{subequations} \label{eq:impl_eqns}
\begin{equation} \label{eq:impl_delta_freq}
    \boldsymbol{\hat{\delta}} = \mathbf{x+(I-zR_c)}\boldsymbol{\hat{\delta}}
\end{equation} 
\begin{equation} \label{eq:impl_x_freq}
    \mathbf{x} = z\Rc\boldsymbol{\hat{\delta}}
\end{equation} 
\begin{equation} \label{eq:impl_u_freq}
    \mathbf{u} = z\Mc\boldsymbol{\hat{\delta}}
\end{equation} 
\end{subequations}

\begin{prop} \label{prop:all_linear_ctrllers}
Any linear controller (i.e. $\mathbf{u}=\mathbf{K}\mathbf{x}$) can be implemented using the controller structure defined in Fig. \ref{fig:blockdiag}.
\end{prop}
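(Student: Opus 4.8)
The plan is to prove the statement constructively: given an arbitrary strictly proper FIR linear controller $\K$ with $\mathbf{u}=\K\mathbf{x}$, I will exhibit a concrete pair of implementation matrices $(\Rc,\Mc)$ which, when substituted into the structure of Fig.~\ref{fig:blockdiag}, reproduces exactly the map $\mathbf{u}=\K\mathbf{x}$, and I will check that this pair meets all the structural requirements (strictly proper, FIR, and $R_c(1)=I$).

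First I would extract the input--output behavior of Fig.~\ref{fig:blockdiag} from the frequency-domain equations~(\ref{eq:impl_eqns}). Equation~(\ref{eq:impl_delta_freq}) simplifies to $\mathbf{x}=z\Rc\boldsymbol{\hat\delta}$, consistent with~(\ref{eq:impl_x_freq}); and since $z\Rc=\sum_{k=1}^{T_c}R_c(k)z^{-(k-1)}$ has constant term $R_c(1)=I$, it is biproper and hence invertible as a transfer matrix. Thus $\boldsymbol{\hat\delta}=(z\Rc)^{-1}\mathbf{x}$, and by~(\ref{eq:impl_u_freq}) the implemented controller is $\mathbf{u}=z\Mc(z\Rc)^{-1}\mathbf{x}$. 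So it suffices to realize any prescribed $\K$ by a valid choice of $(\Rc,\Mc)$ with $z\Mc(z\Rc)^{-1}=\K$.

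The key step is the choice itself: take $\PhicStack=\ZIK$, i.e.\ $\Rc=\frac{1}{z}I$ and $\Mc=\frac{1}{z}\K$. Then $z\Rc=I$ is trivially invertible — and the feedback path $\mathbf{I-zR_c}$ in Fig.~\ref{fig:blockdiag} vanishes, so the diagram is well-posed — giving $z\Mc(z\Rc)^{-1}=\K$ as required. It remains to verify admissibility: $\Rc$ has $R_c(1)=I$ and is strictly proper FIR with order $T_c=1$; and since $\K=\sum_{k=1}^{T}K(k)z^{-k}$ is strictly proper FIR, $\Mc=\frac{1}{z}\K=\sum_{k=1}^{T}K(k)z^{-(k+1)}$ is strictly proper FIR with order $T_c=T+1$. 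Equivalently, in the time domain~(\ref{eq:impl_eqns_time}) this choice gives $\hat\delta_t=x_t$ and $u_t=\sum_{j=1}^{T}K(j)x_{t-j}$, which is precisely $\mathbf{u}=\K\mathbf{x}$.

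I do not expect a genuine obstacle — the proposition is an existence claim, and the construction above settles it — but the points that need care are: (i) the invertibility of $z\Rc$, which hinges on the structural assumption $R_c(1)=I$ baked into the implementation; (ii) the bookkeeping that keeping $\Mc$ strictly proper forces its order up to $T+1$, introducing a harmless one-step delay ($M_c(1)=0$); and (iii) confirming that with this choice the feedback loop of Fig.~\ref{fig:blockdiag} degenerates, so the implementation is unambiguously realizable. One may also remark that this particular realization is only one point of the affine subspace that Theorem~\ref{thm:cl_impl_matrices} will characterize.
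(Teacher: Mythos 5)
Your proof is correct, but it takes a genuinely different route from the paper. The paper's proof first forms the closed-loop maps $\R=(zI-A-B\K)^{-1}$ and $\M=\K(zI-A-B\K)^{-1}$ and then invokes the standard SLS fact that setting $\Rc=\R$, $\Mc=\M$ recovers $\mathbf{u}=\K\mathbf{x}$; you instead choose the degenerate pair $\Rc=\tfrac{1}{z}I$, $\Mc=\tfrac{1}{z}\K$, which collapses the feedback path $\mathbf{I-zR_c}$ to zero so that $\boldsymbol{\hat\delta}=\mathbf{x}$ and $\mathbf{u}=z\Mc\boldsymbol{\hat\delta}=\K\mathbf{x}$ by inspection. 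Your construction has two concrete advantages: it is self-contained (no appeal to the realization theorem of the SLS literature), and it stays inside the paper's declared class of strictly proper FIR transfer matrices whenever $\K$ is FIR, whereas the paper's choice $(\R,\M)$ is generically IIR even for FIR $\K$. What the paper's choice buys in exchange is that it exhibits the canonical point of the solution set (the self-implementation of Corollary~\ref{corollary:self_clim}) and connects the proposition directly to the closed-loop-map machinery used in the rest of the paper; your point $\left(\tfrac{1}{z}I,\tfrac{1}{z}\K\right)$ is a different element of the same affine subspace of Theorem~\ref{thm:cl_impl_matrices}, and indeed one can check it satisfies (\ref{eq:implicit_constr}) with $\Dc=I-\tfrac{1}{z}(A+B\K)$. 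Two minor bookkeeping remarks: your $\Rc$ and $\Mc$ have different FIR orders ($1$ versus $T+1$), so you should zero-pad $\Rc$ to a common $T_c=T+1$ to match the paper's convention of a single order; and your restriction to strictly proper $\K$ is consistent with the paper's scoping, though the same formula handles a static gain as well.
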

\begin{proof}
We can construct closed-loop maps $\R$ and $\M$ directly from $\mathbf{K}$, as shown in (4.4) of \cite{Anderson2019}:
\begin{subequations} \label{eq:phi_definitions}
\begin{equation}
    \R = (zI-A-B\mathbf{K})^{-1}
\end{equation}
\begin{equation}
    \M = \mathbf{K}(zI-A-B\mathbf{K})^{-1}
\end{equation}
\end{subequations}
We can then set $\Rc=\R$ and $\Mc=\M$ in (\ref{eq:impl_eqns}), which gives back the original controller $\mathbf{u}=\mathbf{K}\mathbf{x}$. \qedsymbol
\end{proof}

	\section{IMPLEMENTATION MATRICES} \label{sec:clim}

\subsection{Controllers and closed-loop maps}
\begin{theorem} \label{thm:unique_k}
	Let ($\R$, $\M$) be stable closed-loop maps. The only linear controller $\K$ (i.e. $\mathbf{u}=\K\mathbf{x}$) that achieves these closed-loop maps is $\K=\M\R^{-1}$.
\end{theorem}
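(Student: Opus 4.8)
The plan is to work directly with the closed-loop maps that a linear state-feedback controller induces. Suppose $\K$ is a linear controller (so $\mathbf{u}=\K\mathbf{x}$ along every trajectory) that achieves the given maps, meaning that for every disturbance $\mathbf{w}$ the resulting signals satisfy $\mathbf{x}=\R\mathbf{w}$ and $\mathbf{u}=\M\mathbf{w}$. First I would substitute $\mathbf{u}=\K\mathbf{x}$ into the frequency-domain dynamics $(zI-A)\mathbf{x}=B\mathbf{u}+\mathbf{w}$ to get $(zI-A-B\K)\mathbf{x}=\mathbf{w}$, hence $\mathbf{x}=(zI-A-B\K)^{-1}\mathbf{w}$ and $\mathbf{u}=\K(zI-A-B\K)^{-1}\mathbf{w}$. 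This is exactly the pair of identities \eqref{eq:phi_definitions}, so $\R=(zI-A-B\K)^{-1}$ and $\M=\K(zI-A-B\K)^{-1}$. In particular the first identity exhibits $zI-A-B\K$ as an inverse of $\R$, so $\R$ is invertible as a transfer matrix, with $\R^{-1}=zI-A-B\K$.

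The second step is a one-line algebraic manipulation: from $\M=\K(zI-A-B\K)^{-1}=\K\R$, right-multiplying by $\R^{-1}$ yields $\K=\M\R^{-1}$. Since the only assumption used was that $\K$ achieves $(\R,\M)$, this shows every such $\K$ must equal $\M\R^{-1}$, which is the uniqueness claim. (Existence — that $\M\R^{-1}$ actually does reproduce $(\R,\M)$ — is not strictly needed here since $(\R,\M)$ are assumed to be achievable closed-loop maps, but it is also immediate from the SLS parametrization of \cite{Anderson2019}.)

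The only points that need a little care, and which I would state explicitly rather than grind through, are: (i) upgrading the signal-level relation $\M\mathbf{w}=\K\R\mathbf{w}$ to the transfer-matrix equality $\M=\K\R$, which follows by letting $\mathbf{w}$ range over the coordinate impulses (equivalently, $\mathbf{w}$ is an arbitrary free disturbance); and (ii) the standard fact that a square rational transfer matrix possessing a one-sided inverse is invertible, so that $\R^{-1}$ is well-defined and the displayed formula for $\K$ is unambiguous. I do not expect any genuine obstacle: the substance of the proof is simply writing down the closed-loop maps induced by $\mathbf{u}=\K\mathbf{x}$ and observing that $\R$ is invertible, after which the formula $\K=\M\R^{-1}$ drops out.
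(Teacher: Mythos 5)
Your argument is correct and is essentially the paper's own proof: both hinge on deriving $\M=\K\R$ from the closed-loop identities \eqref{eq:phi_definitions} and then cancelling the invertible $\R$. The only cosmetic difference is that you argue uniqueness directly for an arbitrary achieving $\K$, whereas the paper phrases the same cancellation as a contradiction between two hypothetical controllers $\K$ and $\Kone$.
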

\begin{proof}
	By Theorem 4.1 in \cite{Anderson2019}, $\K=\M\R^{-1}$ achieves the closed-loop maps. We show uniqueness by contradiction. Assume there is another linear controller $\Kone$, $\Kone \neq \K$, that also achieves the desired closed-loop maps. Since both $\K$ and $\Kone$ achieve ($\R$, $\M$),
	
	\begin{subequations}
		\begin{equation} \label{eq:r_k1k2}
			\R = (zI - A - B\Kone)^{-1} = (zI - A - B\K)^{-1}
		\end{equation}
		\begin{equation} \label{eq:m_k1k2}
			\M = \Kone(zI - A - B\Kone)^{-1} = \K(zI - A - B\K)^{-1}
		\end{equation}
	\end{subequations}
	Substituting (\ref{eq:r_k1k2}) into (\ref{eq:m_k1k2}) gives
	\begin{equation}
		\Kone\R = \K\R
	\end{equation}
	Since $\R$ is invertible, this implies that $\Kone=\K$. Contradiction! \qedsymbol
\end{proof}

Theorem \ref{thm:unique_k}, along with the definitions from (\ref{eq:phi_definitions}), show a one-to-one mapping between ($\R$, $\M$) and $\K$. However, the linear controller $\K$ can be implemented in a variety of ways. For example, we could directly implement $\mathbf{u}=\K\mathbf{x}$; we could also implement a linear controller using the structure shown in Figure \ref{fig:blockdiag}. In the original SLS framework, the latter is used to avoid direct matrix inversion of $\R$.

\subsection{Implementing closed-loop maps}
For the controller structure defined in Fig. \ref{fig:blockdiag}, let the controller implemented by ($\Rc$, $\Mc$) achieve closed-loop maps ($\Rtilde$, $\Mtilde$). We define the following terminology:

\begin{defn}
	($\Rc$, $\Mc$) are the \textit{implementation transfer matrices} for the closed-loop maps ($\Rtilde$, $\Mtilde$). We will refer to them as \textit{implementation matrices}.
\end{defn}

\begin{defn}
	We call ($\Rtilde$, $\Mtilde$) the \textit{implemented closed-loop maps} of the controller ($\Rc$, $\Mc$).
\end{defn}

The implemented closed-loop maps are found by combining (\ref{eq:impl_eqns_time}) and (\ref{eq:dynamics}) as done in \cite{Ho2019}:
\begin{equation} \label{eq:impl_cl_maps}
	\begin{bmatrix}\Rtilde \\ \Mtilde\end{bmatrix} = \PhicStack \Dc^{-1}
\end{equation}

Where $\Dc$ is a helper variable defined as
\begin{equation} \label{eq:Delta_c_freq}
\Dc = \ZAB \PhicStack
\end{equation}
Note that $\Dc$ can also be written as $I+\D$. This is the same formulation used by (4.22) in \cite{Anderson2019}, modulo notational differences (we use $\Rc$ and $\Mc$ instead of $\Rhat$, $\Mhat$). $\Dc$ is invertible since its leading spectral element, $I$, is invertible.

Our analysis largely focuses on closed-loop maps ($\R, \M$) instead of the controller $\K$. However, due to the one-to-one mapping between controller and closed-loop maps, we can also view ($\Rc$, $\Mc$) as implementation matrices for the controller $\K=\M\R^{-1}$.

\begin{theorem} \label{thm:cl_impl_matrices}
	For $R_c(1)=I$, ($\Rc$, $\Mc$) are implementation matrices for ($\R$, $\M$) \textit{if and only if} they satisfy
	\begin{equation} \label{eq:implicit_constr}
		\PhicStack = \PhiStack \ZAB \PhicStack
	\end{equation}
\end{theorem}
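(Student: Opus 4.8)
The plan is to reduce the statement to a short algebraic manipulation of the identities (\ref{eq:impl_cl_maps}) and (\ref{eq:Delta_c_freq}), which already encode the full controller dynamics. First I would unpack the two relevant definitions: by definition, $(\Rc,\Mc)$ are implementation matrices for $(\R,\M)$ exactly when the implemented closed-loop maps $(\Rtilde,\Mtilde)$ satisfy $(\Rtilde,\Mtilde)=(\R,\M)$. Equation (\ref{eq:impl_cl_maps}) gives $(\Rtilde,\Mtilde)$ explicitly as $\PhicStack\Dc^{-1}$, and equation (\ref{eq:Delta_c_freq}) gives $\Dc=\ZAB\PhicStack$.

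The one fact that needs care is the invertibility of $\Dc$. Writing $\Dc = z\Rc - A\Rc - B\Mc$ and using that $\Rc,\Mc$ are strictly proper with $R_c(1)=I$, the leading (i.e.\ $z^0$) spectral component of $\Dc$ is $R_c(1)=I$; hence $\Dc = I+\D$ is an invertible transfer matrix, as already observed just after (\ref{eq:Delta_c_freq}). This is precisely where the hypothesis $R_c(1)=I$ enters, and it is what makes the equivalence below reversible.

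With these in hand the argument is a chain of equivalences. $(\Rc,\Mc)$ implement $(\R,\M)$ iff $\PhicStack\Dc^{-1}=\PhiStack$; right-multiplying by the invertible $\Dc$, this holds iff $\PhicStack=\PhiStack\Dc$; and substituting $\Dc=\ZAB\PhicStack$ from (\ref{eq:Delta_c_freq}) rewrites this as $\PhicStack=\PhiStack\ZAB\PhicStack$, which is exactly (\ref{eq:implicit_constr}). Every step is reversible, which delivers the ``if and only if''.

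I do not expect a genuine obstacle here beyond the invertibility bookkeeping above; the real content of the theorem is simply that (\ref{eq:impl_cl_maps})--(\ref{eq:Delta_c_freq}) can be recast in the ``implicit'' fixed-point form (\ref{eq:implicit_constr}), eliminating the explicit inverse $\Dc^{-1}$. As a consistency check worth noting in the write-up, any pair produced by (\ref{eq:impl_cl_maps}) automatically satisfies the achievability identity $\ZAB\,\PhicStack\Dc^{-1} = \Dc\Dc^{-1} = I$, so whenever (\ref{eq:implicit_constr}) holds the target $(\R,\M)$ is necessarily a valid closed-loop map, consistent with how $(\R,\M)$ is used elsewhere in the paper.
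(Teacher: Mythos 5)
Your proposal is correct and follows essentially the same route as the paper: both reduce the claim to substituting (\ref{eq:Delta_c_freq}) into (\ref{eq:impl_cl_maps}) and multiplying through by the invertible $\Dc$, with the paper splitting this into separate necessity and sufficiency steps where you phrase it as a reversible chain of equivalences. Your explicit justification that $R_c(1)=I$ makes $\Dc=I+\D$ invertible is exactly the observation the paper records just after (\ref{eq:Delta_c_freq}), so nothing is missing.
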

\begin{proof}
	\textit{Necessity}. If ($\Rc$, $\Mc$) are implementation matrices for ($\R$, $\M$), then we require
	\begin{equation} \label{eq:equal_maps}
		\begin{bmatrix} \Rtilde \\ \Mtilde \end{bmatrix} = \PhiStack
	\end{equation}
		
	Substituting (\ref{eq:impl_cl_maps}) into (\ref{eq:equal_maps}) and multiplying by $\Dc$, then writing out $\Dc$ in terms of ($A$, $B$, $\Rc$, $\Mc$), gives (\ref{eq:implicit_constr}).
	
	\textit{Sufficiency}. If ($\Rc$, $\Mc$) satisfy (\ref{eq:implicit_constr}), we can substitute (\ref{eq:implicit_constr}) into (\ref{eq:impl_cl_maps}) to conclude that ($\Rtilde$, $\Mtilde$) = ($\R$, $\M$), i.e. ($\Rc$, $\Mc$) are implementation matrices for ($\R$, $\M$). \qedsymbol
\end{proof}

This constraint describes an affine subspace of implementation matrices for ($\R$, $\M$).

\begin{corollary} \label{corollary:m1}
	If ($\Rc$, $\Mc$) are implementation matrices for ($\R$, $\M$), then the first spectral components of $\M$ and $\Mc$ are equal, i.e. $M_c(1)$ = $\Phi_u(1)$.
\end{corollary}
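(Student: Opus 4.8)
The plan is to read off the bottom block-row of the characterization in Theorem~\ref{thm:cl_impl_matrices} and then compare leading spectral components. Writing $\Dc = \ZAB\PhicStack$ as in (\ref{eq:Delta_c_freq}), the constraint (\ref{eq:implicit_constr}) says exactly $\PhicStack = \PhiStack\,\Dc$; in particular, its second block-row reads $\Mc = \M\,\Dc$.

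Next I would invoke the observation already recorded after (\ref{eq:Delta_c_freq}), namely that $\Dc = I + \D$ with $\D$ strictly proper. Concretely, expanding $\Dc = (zI-A)\Rc - B\Mc = z\Rc - A\Rc - B\Mc$: the terms $A\Rc$ and $B\Mc$ are strictly proper because $\Rc$ and $\Mc$ are, while $z\Rc$ has constant ($z^0$) term $R_c(1) = I$ and is otherwise strictly proper. This is precisely where the hypothesis $R_c(1)=I$ is used, and it shows $\Dc - I$ is strictly proper.

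Then I would substitute to get $\Mc = \M(I+\D) = \M + \M\D$. Since $\M$ and $\D$ are both strictly proper FIR transfer matrices, $\M$ has no $z^0$ term and lowest-order term $\Phi_u(1)z^{-1}$, whereas the product $\M\D$ has lowest-order term of order $z^{-2}$. Matching the $z^{-1}$ spectral components on both sides then yields $M_c(1) = \Phi_u(1)$. Equivalently, one can argue by limits: $M_c(1) = \lim_{z\to\infty} z\,\Mc(z) = \lim_{z\to\infty} \big(z\,\M(z)\big)\,\Dc(z) = \Phi_u(1)\cdot I$.

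There is no substantive obstacle here; the only thing to handle carefully is the spectral bookkeeping. One must confirm that $\D$ is genuinely strictly proper — which is exactly what the assumption $R_c(1)=I$ buys — and that $\M$ has zero constant term, so that neither factor contributes at order $z^{-1}$ in the product $\M\D$, leaving $\M$ alone to determine the $z^{-1}$ component of $\Mc$.
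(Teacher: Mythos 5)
Your proof is correct and follows essentially the same route as the paper, which simply reads off the leading spectral component of the second block-row of (\ref{eq:implicit_constr}) using $\Delta_c(0) = R_c(1) = I$. Your expansion $\Mc = \M(I+\D)$ with $\M\D$ starting at order $z^{-2}$ is just a more explicit rendering of that same spectral bookkeeping.
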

This equivalence arises directly from writing (\ref{eq:implicit_constr}) in terms of its spectral elements.

\begin{corollary} \label{corollary:self_clim}
	For $T_c \geq T$, ($\R$, $\M$) are implementation matrices for themselves.
\end{corollary}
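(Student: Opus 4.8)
The plan is to verify that the obvious candidate $\Rc=\R$, $\Mc=\M$ satisfies the two hypotheses of Theorem \ref{thm:cl_impl_matrices}, namely $R_c(1)=I$ and the affine identity \eqref{eq:implicit_constr}. The one ingredient I would import is the SLS realizability constraint for closed-loop maps: since $(\R,\M)$ are closed-loop maps, Theorem \ref{thm:unique_k} and \eqref{eq:phi_definitions} give, with $\K=\M\R^{-1}$, $\ZAB\PhiStack = (zI-A)\R - B\M = \big((zI-A)-B\K\big)(zI-A-B\K)^{-1} = I$.

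From this identity I would read off the constant ($z^{0}$) spectral component: the only contribution on the left comes from $zI\cdot\R$, since $\R$ is strictly proper with leading term $\Phi_x(1)z^{-1}$, so $\Phi_x(1)=I$, hence $R_c(1)=\Phi_x(1)=I$ for the candidate. Substituting $\Rc=\R$, $\Mc=\M$ into the right-hand side of \eqref{eq:implicit_constr} then yields $\PhiStack\left(\ZAB\PhiStack\right) = \PhiStack\, I = \PhiStack$, which is exactly the left-hand side; equivalently, $\Dc=\ZAB\PhicStack=I$, so \eqref{eq:impl_cl_maps} gives $\begin{bmatrix}\Rtilde\\\Mtilde\end{bmatrix} = \PhicStack\,\Dc^{-1} = \PhiStack\, I^{-1} = \PhiStack$. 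Theorem \ref{thm:cl_impl_matrices} (or \eqref{eq:impl_cl_maps} directly) then delivers the corollary.

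The remaining point is to explain where $T_c\geq T$ enters: an order-$T_c$ implementation matrix is an FIR transfer matrix $\sum_{k=1}^{T_c}R_c(k)z^{-k}$, and for it to equal $\R=\sum_{k=1}^{T}\Phi_x(k)z^{-k}$ (and $\Mc=\M$ likewise) one must pad with zero spectral components beyond horizon $T$, which is possible precisely when $T_c\geq T$. I do not expect any real obstacle here — the substance of the corollary is simply that the implicit constraint \eqref{eq:implicit_constr} degenerates onto the ordinary realizability constraint once the implementation matrices are taken equal to the closed-loop maps, and the horizon bound is pure bookkeeping that makes $\Rc=\R$ a legitimate order-$T_c$ implementation matrix.
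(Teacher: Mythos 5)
Your proof is correct and follows the route the paper intends: the paper states this corollary without proof as an immediate consequence of Theorem \ref{thm:cl_impl_matrices}, the point being exactly that the SLS achievability identity $\ZAB\PhiStack = I$ collapses the right-hand side of \eqref{eq:implicit_constr} to $\PhiStack$ (equivalently $\Dc = I$ in \eqref{eq:impl_cl_maps}), recovering the standard SLS controller of \cite{Anderson2019}. Your additional checks --- that $\Phi_x(1)=I$ supplies the hypothesis $R_c(1)=I$, and that $T_c \geq T$ is needed only so that an order-$T_c$ FIR implementation matrix can equal the order-$T$ map --- are both accurate.
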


($\R$, $\M$) are used as implementation matrices in \cite{Anderson2019}.

\begin{corollary}
	If ($\Rc$, $\Mc$) are implementation matrices for ($\R$, $\M$), then $K = \M\R^{-1} = \Mc\Rc^{-1}$
\end{corollary}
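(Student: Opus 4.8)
The plan is to exploit the implicit constraint (\ref{eq:implicit_constr}) together with the helper variable $\Dc$ from (\ref{eq:Delta_c_freq}). Substituting the definition $\Dc = \ZAB \PhicStack$ into the right-hand side of (\ref{eq:implicit_constr}) collapses that constraint to $\PhicStack = \PhiStack\,\Dc$, i.e.\ the two block-row identities $\Rc = \R\Dc$ and $\Mc = \M\Dc$. Here $\Dc$ is invertible because its leading spectral element is $I$ (as noted after (\ref{eq:Delta_c_freq})), and $\R$ is invertible since by (\ref{eq:phi_definitions}) it equals $(zI-A-B\K)^{-1}$; hence $\Rc = \R\Dc$ is invertible with $\Rc^{-1} = \Dc^{-1}\R^{-1}$. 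A one-line computation then yields $\Mc\Rc^{-1} = \M\Dc\,\Dc^{-1}\R^{-1} = \M\R^{-1}$, and the identity $\M\R^{-1} = \K$ is precisely Theorem \ref{thm:unique_k}.

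As an alternative, more conceptual route, I could read the controller directly off the implementation equations (\ref{eq:impl_eqns}). Combining (\ref{eq:impl_x_freq}) and (\ref{eq:impl_u_freq}) and using that $z\Rc$ is invertible (its leading spectral element is $R_c(1)=I$), the map from $\mathbf{x}$ to $\mathbf{u}$ realized by $(\Rc,\Mc)$ is $\mathbf{u} = \Mc\Rc^{-1}\mathbf{x}$; that is, $(\Rc,\Mc)$ implements the linear controller $\Mc\Rc^{-1}$. Since $(\Rc,\Mc)$ are implementation matrices for $(\R,\M)$, this controller achieves the closed-loop maps $(\R,\M)$, so by the uniqueness part of Theorem \ref{thm:unique_k} it must coincide with $\M\R^{-1} = \K$.

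The argument is essentially bookkeeping on already-established identities; the only point I would take care to state explicitly is the invertibility of $\Rc$ (equivalently, of $\R$ and of $\Dc$), since that is what licenses passing from the pair of equalities $\Rc = \R\Dc$, $\Mc = \M\Dc$ to a statement about the product $\Mc\Rc^{-1}$. Everything else is substitution.
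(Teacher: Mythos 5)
Your proposal is correct. The paper states this corollary without any proof, and your first route --- substituting $\Dc = \ZAB\PhicStack$ into (\ref{eq:implicit_constr}) to get $\Rc = \R\Dc$ and $\Mc = \M\Dc$, then cancelling the invertible $\Dc$ --- is exactly the derivation the corollary's placement after Theorem \ref{thm:cl_impl_matrices} suggests; you are right to flag the invertibility of $\R$ and $\Dc$ as the only step needing explicit justification, and both are already established in the paper.
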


\subsection{Existence of solutions}

To better understand the dimension of the space of implementation matrices, we rearrange the constraint (\ref{eq:implicit_constr}) so that the variables ($\Rc$, $\Mc$) appear on only one side of the constraint.

Rewrite $\Dc$ in block-matrix form:
\begin{equation}
    \begin{bmatrix}
        \Delta_c(0) \\ \Delta_c(1) \\ \vdots \\ \Delta_c(T_c)
    \end{bmatrix} = 
    \begin{bmatrix}
        I & & & 0 & \\
        -A & I & & -B & & \\
         & \ddots & \ddots & & \ddots & \\
         & & -A & & & -B
    \end{bmatrix}
    \begin{bmatrix}
        R_c(1) \\ \vdots \\ R_c(T_c) \\ M_c(1) \\ \vdots \\ M_c(T_c)
    \end{bmatrix}
\end{equation}

Rewrite the right hand side of (\ref{eq:implicit_constr}) in block-matrix form:
\begin{equation}
    \begin{bmatrix}
        R_c(1) \\ \vdots \\ \vdots \\ R_c(T_c) \\ 0 \\ \vdots \\ 0
    \end{bmatrix} = 
    \begin{bmatrix}
        \Phi_x(1) \\
        \Phi_x(2) & \ddots \\
        \vdots \\
        \Phi_x(T) \\
         & \ddots \\
         & & \Phi_x(T)
    \end{bmatrix}
    \begin{bmatrix}
        \Delta_c(0) \\ \Delta_c(1) \\ \vdots \\ \Delta_c(T_c)
    \end{bmatrix} 
\end{equation}
We show only the formulation for $\Rc$; the formulation for $\Mc$ is identical but with $\M$ and $\Mc$ instead of $\R$ and $\Rc$.

Using the block-matrix formulations, we can rearrange (\ref{eq:implicit_constr}) into a constraint of the form 

\begin{subequations} \label{eq:explicit_constr}
	\begin{equation} \label{eq:constr_FvG}
	    Fv = G
	\end{equation}
	\begin{equation}
		v = \begin{bmatrix} R_c(2) \\ \vdots \\ R_c(T_c) \\ M_c(1) \\ \vdots \\ M_c(T_c) \end{bmatrix}
	\end{equation}
\end{subequations}

where $F$ and $G$ are matrices that do not depend on $\Rc$ and $\Mc$. The total number of constraints is $(T_c+T)(m+n)$. 

\begin{lemma} \label{lemma:existence_solutions}
	The implementation constraints (as defined in (\ref{eq:implicit_constr})) are feasible \textit{if and only if} $\mathrm{rank}(F) = \mathrm{rank}(F | G)$. If feasible, the solution space has dimension $dim(\mathrm{null}(F)) \times n$, where $n$ is the number of states in the system.
\end{lemma}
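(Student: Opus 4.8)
The plan is to reduce the lemma to two elementary facts about the matrix equation $Fv = G$ obtained in (\ref{eq:explicit_constr}), the only wrinkle being that $v$ and $G$ are not column vectors but matrices. Indeed, $\PhicStack$ has the same ($n$-)column dimension as $\PhiStack$, so every spectral component $R_c(k)$ ($n \times n$) and every $M_c(k)$ ($m \times n$) has exactly $n$ columns. Hence the stacked unknown $v$ in (\ref{eq:explicit_constr}) is a matrix with $n$ columns, and so is the right-hand side $G$; by Theorem \ref{thm:cl_impl_matrices} together with the rearrangement carried out just before the lemma, ($\Rc$, $\Mc$) are implementation matrices for ($\R$, $\M$) if and only if this $v$ solves $Fv = G$.

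For feasibility, note that $Fv = G$ decouples across the $n$ columns: writing $v = [v_1 \;\cdots\; v_n]$ and $G = [G_1 \;\cdots\; G_n]$, the equation holds iff $Fv_j = G_j$ for each $j$, which is solvable iff $G_j \in \mathrm{range}(F)$. Thus the system is feasible iff $\mathrm{range}(G) \subseteq \mathrm{range}(F)$, and this inclusion is equivalent to $\mathrm{rank}(F) = \mathrm{rank}(F \mid G)$: appending to $F$ columns already lying in its column space leaves the rank unchanged, and conversely equal ranks force $\mathrm{range}(G) \subseteq \mathrm{range}(F \mid G) = \mathrm{range}(F)$. This is the claimed criterion (the Rouch\'e--Capelli condition, applied with a matrix right-hand side).

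For the dimension, suppose the system is feasible and fix one particular solution $v_0$, an $n$-column matrix; recall $R_c(1)=I$ is already pinned and is not part of $v$. Every solution has the form $v_0 + w$ with $Fw = 0$, and every such translate is again a solution, so the solution set is an affine copy of $\mathcal{N} := \{\, w : Fw = 0,\ w \text{ an } n\text{-column matrix}\,\}$. Since $Fw = 0$ again decouples columnwise, $\mathcal{N}$ is isomorphic to the direct sum of $n$ copies of $\mathrm{null}(F)$, so $\dim \mathcal{N} = n \cdot \dim(\mathrm{null}(F))$, which is exactly $\dim(\mathrm{null}(F)) \times n$.

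None of these steps is difficult; the point that takes the most care — and the only place the ``$\times n$'' factor arises — is recognizing that $v$ carries $n$ columns rather than being a single vector, so that both the range condition and the nullspace count must be read columnwise. I would also state explicitly that the reduction of (\ref{eq:implicit_constr}) to the form $Fv = G$, with $F$ and $G$ independent of ($\Rc$, $\Mc$), is exactly the block-matrix construction preceding the lemma (the $(T_c+T)(m+n)$ per-column row count serving as a convenient sanity check), so that the lemma follows directly from the two linear-algebra observations above.
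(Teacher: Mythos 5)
Your proposal is correct and follows essentially the same route as the paper, which simply cites the Rouch\'e--Capelli theorem applied to the linear system (\ref{eq:explicit_constr}); you have merely filled in the details that the one-line proof leaves implicit. Your columnwise reading of $Fv=G$, which is where the factor of $n$ in the dimension count comes from, is exactly the right way to make that citation precise.
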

\begin{proof}
	This result is a direct application of the Rouch\'e-Capelli theorem to the linear system defined in (\ref{eq:explicit_constr}). \qedsymbol
\end{proof}

Corollary \ref{corollary:self_clim} states that (\ref{eq:implicit_constr}) has at least one solution for $T_c \geq T$. When $T_c < T$, we can check the rank of $F$ and $[F | G]$ and calculate the dimension of the solution space if it exists.
	\section{STABILITY} \label{sec:stability}

\subsection{Internal dynamics} \label{sec:int_dynamics}
The system is internally stable if the dynamics of $\hat{\delta}$, the internal signal, are stable. By substituting (\ref{eq:impl_eqns_time}) into (\ref{eq:dynamics}) and rearranging, we can obtain internal dynamics of the form

\begin{subequations}
\begin{equation} 
	z_t =
	\begin{bmatrix}
	\hat{\delta}_{t-T_c+1} \\
	\vdots \\
	\hat{\delta}_{t-1} \\
	\hat{\delta}_t
	\end{bmatrix}, \quad z_{t+1} = A_zz_t
\end{equation}
\begin{equation}
	A_z =
    \begin{bmatrix}
    0 & I & \ldots & 0 & 0 \\
    \vdots & & & \ddots \\
     0 & 0 & \ldots & 0 & I \\
    -\Delta_c(T_c) & & \ldots & & -\Delta_c(1)
    \end{bmatrix}
\end{equation} 
\end{subequations}

\subsection{Stability check} \label{sec:stability_check}
We can verify internal stability \textit{a posteriori} by checking that $A_z$
is stable. Alternatively, a sufficient condition for internal stability is $\|\D\| < 1$ \cite{Anderson2019}.

The stability of $A_z$ can be checked in a distributed manner. First, a helpful proposition:

\begin{prop} \label{prop:distr_check}
	Let $\|\cdot \|$ be an induced matrix norm. For $A \in \mathbb{R}^{n \times n}$, if $\exists m > 0$ s.t. $\|A^m\| < 1$, then $A$ is stable.
\end{prop}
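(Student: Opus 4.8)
The plan is to reduce the claim to the standard fact that the spectral radius is bounded above by any induced matrix norm. Recall that $A$ is stable precisely when its spectral radius $\rho(A) := \max\{|\lambda| : \lambda \text{ an eigenvalue of } A\}$ satisfies $\rho(A) < 1$, so it suffices to show $\rho(A) < 1$ under the hypothesis $\|A^m\| < 1$.

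First I would establish that $\rho(B) \le \|B\|$ for every square matrix $B$ and every induced matrix norm. This is immediate: if $\lambda$ is an eigenvalue of $B$ with a (nonzero) eigenvector $v$, then from the defining property of an induced norm, $|\lambda|\,\|v\|_{\mathrm{vec}} = \|\lambda v\|_{\mathrm{vec}} = \|Bv\|_{\mathrm{vec}} \le \|B\|\,\|v\|_{\mathrm{vec}}$, and dividing by $\|v\|_{\mathrm{vec}} \neq 0$ gives $|\lambda| \le \|B\|$; taking the maximum over eigenvalues yields $\rho(B) \le \|B\|$.

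Next I would use the spectral mapping observation that the eigenvalues of $A^m$ are exactly the $m$-th powers of the eigenvalues of $A$ (counted with algebraic multiplicity), which follows, e.g., from triangularizing $A$ over $\mathbb{C}$. Consequently $\rho(A^m) = \rho(A)^m$. Combining this with the previous step applied to $B = A^m$ and with the hypothesis gives
\begin{equation}
	\rho(A)^m = \rho(A^m) \le \|A^m\| < 1,
\end{equation}
and since $m > 0$ we conclude $\rho(A) < 1$, i.e. $A$ is stable. \qedsymbol

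There is essentially no obstacle here; the only point requiring a little care is invoking the two ingredients — the bound $\rho(\cdot) \le \|\cdot\|$ for induced norms and the identity $\rho(A^m) = \rho(A)^m$ — both of which are classical. If one wants to be fully self-contained one can cite a standard linear-algebra reference for these, rather than reproving them in detail.
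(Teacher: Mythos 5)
Your proof is correct, but it takes a genuinely different route from the paper's. You reduce the claim to the spectral radius: using $\rho(B)\le\|B\|$ for induced norms and the spectral mapping identity $\rho(A^m)=\rho(A)^m$, you get $\rho(A)^m\le\|A^m\|<1$ and hence $\rho(A)<1$, then invoke the standard equivalence between discrete-time stability and $\rho(A)<1$. The paper instead argues directly at the level of norms of powers: setting $\rho=\|A^m\|^{1/m}<1$ and writing $t=qm+r$, submultiplicativity gives $\|A^t\|\le\|A^m\|^{q}\|A\|^{r}\le C\rho^{t}$, so $\|A^t x_0\|\le C\rho^t\|x_0\|\to 0$, which is taken as the definition of stability. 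Your argument is shorter and leans on two classical facts (eigenvector bound for induced norms, triangularization for the spectral mapping step); the paper's is more elementary in that it never mentions eigenvalues, and it has the side benefit of producing an explicit geometric decay envelope $C\rho^t$ for the transient, which is consonant with the paper's later use of $\|A_z^k\|$ as a measure of transient amplitude in the distributed stability check. Either proof is acceptable; just make sure, if you use your version, that you state (or cite) the equivalence between $\rho(A)<1$ and the asymptotic-decay definition of stability the paper is working with, since that equivalence is doing real work in your first sentence.
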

\begin{proof}
	Let $\rho = \|A^m\|^{1/m}$, $\rho \in [0, 1)$. Using norm submultiplicativity and some algebra, we can show that $\forall t > m$, $\|A^t\| \leq C\rho^t$ where $C$ is some constant. Using this upper bound and induced norm properties, we can show that $\forall x_o \in \mathbb{R}^n$, $\lim_{t\to\infty}\|A^tx_o\| = 0$. This is the definition of stability in the discrete time setting. \qedsymbol
\end{proof}

Let each processor store $A_z$ and some columns of $A_z^k$, denoted $A_{z(i:j)}^k$. Overall, every column of $A_z^k$ is stored on some processor. The stability check procedure is as follows, starting with $k=1$:

\begin{enumerate}
	\item Calculate $A_{z(i:j)}^k$ by multiplying $A_z$ and $A_{z(i:j)}^{k-1}$
	\item Check the induced 1-to-1 norm of $A_{z(i:j)}^k$
	\item Consensus on whether a termination condition has been met. If no termination condition is met, increment $k$ and return to Step 1
\end{enumerate}

The clear termination condition is $\|A_z^k\| < 1$; then, $A_z$ is certified to be stable by Proposition \ref{prop:distr_check}. We suggest two additional termination conditions:

\begin{itemize}
	\item $\|A_z^k\| > M$, where $M$ is some predetermined threshold. Since $\|A_z^k\|$ corresponds to the amplitude of the transient response, this termination condition corresponds to finding an unacceptably large transient condition
	\item $k > k_{max}$, where $k_{max}$ is some predetermined maximum number of iterations
\end{itemize}

Both conditions would indicate that the stability check failed to certify stability. Since we select a column-wise separable norm, the entire procedure can be distributed. The complexity per iteration scales quadratically with $n$, under the conservative assumption that each node has at least one processor. For the system in Section \ref{sec:examples}, this procedure certifies stability in 7 iterations for the low-order controller and 32 iterations for the full-order controller.

	\section{APPROXIMATE IMPLEMENTATIONS} \label{sec:aprox_impl}

The solution space defined by (\ref{eq:implicit_constr}), although it exists for $T_c \geq T$, often yields solutions that are unstable. Further, Corollary \ref{corollary:m1} gives a fundamental limit on the sparsity of $\Mc$. If $\Phi_u(1)$ is dense, we cannot find implementation matrices that support any type of sparsity (e.g. communication delay, locality). These necessitate relaxations of (\ref{eq:implicit_constr}).

For a relaxed implementation, we want the implemented closed-loop maps ($\Rtilde$, $\Mtilde$) to be as close to the optimal closed-loop maps ($\R$, $\M$) as possible while maintaining internal stability, i.e.

\begin{equation} \label{eq:noncvx_relax}
	\begin{aligned}
	\min_{\Rc, \Mc} \|\PhicStack (I + \D)^{-1} - \PhiStack\|\\
	\textrm{s.t.} \quad (I + \D)^{-1} \textrm{stable}, \PhicStack \in \mathcal{S}
	\end{aligned}
\end{equation}

where $\mathcal{S}$ includes sparsity and FIR constraints, and $I + \D = \Dc$. This optimization problem is clearly nonconvex. Factoring the objective function as 
\begin{equation}
	\|(\PhicStack - \PhiStack(I + \D))(I + \D)^{-1}\|
\end{equation}
and using similar submultiplicativity, small-gain, and power series arguments as Section 4.5.1 of \cite{Anderson2019}, we can upper bound the optimization problem (\ref{eq:noncvx_relax}) with this quasi-convex problem:
\begin{equation} \label{eq:nested_relax}
	\begin{aligned}
	\min_{\gamma\in[0,1)}\frac{1}{1-\gamma} \min_{\Rc, \Mc, \D} \|\PhicStack - \PhiStack(I + \D)\|\\
	\textrm{s.t.} \quad \ZAB\PhicStack=(I + \D), \\
	\|\D\| \leq \gamma, \PhicStack \in \mathcal{S}
	\end{aligned}
\end{equation}

This is similar to the virtualized SLS method \cite{Matni2018} \cite{Anderson2019}, with one key difference. For an objective $g(\R, \M)$, the virtualized SLS method uses $g(\Rc, \Mc)$ as the objective, while our two-step method uses
\begin{equation}
	\|\PhicStack - \PhiStack(I + \D)\| 
\end{equation}
as the objective. This is the equation error for (\ref{eq:implicit_constr}), and is a heuristic for the closed-loop difference.

The nested optimization problem defined by (\ref{eq:nested_relax}) is time-consuming to solve; it can also be mathematically infeasible if the sparsity constraints $\mathcal{S}$ are too strict. We instead solve (\ref{eq:actual_relax}), which is much quicker and uses a regularizer on $\D$ to promote stability. We suggest starting with a small $\lambda$, solving (\ref{eq:actual_relax}), checking for stability using the distributed method presented in Section \ref{sec:stability_check}, and increasing $\lambda$ if the stability check is failed. Alternatively, we can enforce $\|\D\| < 1$.

\begin{equation} \label{eq:actual_relax}
	\begin{aligned}
	\min_{\Rc, \Mc, \D} \|\PhicStack - \PhiStack(I + \D)\| + \lambda\|\D\|\\
	\textrm{s.t.} \quad \ZAB\PhicStack=(I + \D), \PhicStack \in \mathcal{S}
	\end{aligned}
\end{equation}

We can also include additional objectives in (\ref{eq:actual_relax}), e.g. $\LOne$ regularization on ($\Rc$, $\Mc$) to promote sparsity.

The optimization problem (\ref{eq:actual_relax}) is column-wise separable if we choose a column-wise separable norm for the objective (e.g. $\HTwo$ norm). Like the original SLS problem, it can be decomposed into subproblems to be solved in parallel.
	\section{CLOSED-LOOP CONSTRAINTS VS. CONTROLLER CONSTRAINTS} \label{sec:cl_vs_ctrller}
In this section, we discuss the physical interpretation of separately applying locality and delay constraints to the closed-loop and to the controller, and when such constraints are appropriate. This separation is not possible in standard SLS, since the closed-loop maps themselves are used as implementation matrices for the controller.

First, a result on how applying controller constraints on the closed-loop maps can be overly restrictive:
\begin{lemma} \label{lemma:m_in_k}
	Let $\K$ be the controller corresponding to the closed-loop maps ($\R$, $\M$). Then, the operator $\M$ lies in the range of the operator $\K$.
\end{lemma}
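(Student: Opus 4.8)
The plan is to obtain this lemma as an immediate consequence of Theorem \ref{thm:unique_k}. That theorem already identifies the controller corresponding to $(\R,\M)$ as the unique linear map $\K = \M\R^{-1}$, where the inverse $\R^{-1}$ is well defined because $\R$ is a stable closed-loop map whose leading spectral element is the identity (so $z\R$ is biproper and hence invertible). The only manipulation needed is to right-multiply $\K = \M\R^{-1}$ by $\R$, which yields the factorization
\begin{equation}
	\M = \K\R .
\end{equation}

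From this factorization the range statement follows directly. Interpreting ``$\M$ lies in the range of $\K$'' as range inclusion of the associated operators (the setting of Douglas' factorization lemma), we only need to note that the identity above writes $\M$ as $\K$ post-composed with the bounded operator $\R$: for every disturbance signal $\mathbf{w}$ we have $\M\mathbf{w} = \K(\R\mathbf{w})$, so every signal in the image of $\M$ also lies in the image of $\K$, i.e. $\operatorname{ran}(\M)\subseteq\operatorname{ran}(\K)$. Equivalently, $\M$ lies in the range of the right-multiplication operator $X \mapsto \K X$, with $\R$ furnishing the witness.

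I do not expect a genuine obstacle here; the mathematical content is entirely carried by Theorem \ref{thm:unique_k}, and what remains is the one-line rearrangement $\M = \K\R$ together with the correct reading of ``range.'' The single point that deserves an explicit remark is the invertibility of $\R$ used to pass from $\K = \M\R^{-1}$ to $\M = \K\R$; this is already part of the hypotheses of Theorem \ref{thm:unique_k} ($(\R,\M)$ stable with $\Phi_x(1)=I$), so no new argument is needed. If a fully operator-theoretic statement is preferred, one can invoke Douglas' range-inclusion theorem to convert the factorization $\M=\K\R$ into $\operatorname{ran}(\M)\subseteq\operatorname{ran}(\K)$; otherwise the explicit factorization is self-contained and already proves the claim.
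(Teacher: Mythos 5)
Your proposal is correct and follows exactly the paper's own argument: invoke Theorem \ref{thm:unique_k} to obtain $\K\R = \M$ and read this factorization as the range inclusion. The extra remarks on the invertibility of $\R$ and the Douglas-lemma reading of ``range'' are fine but not needed beyond what the paper already states.
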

\begin{proof}
	By Theorem \ref{thm:unique_k}, we have that $\K\R$ = $\M$.
\end{proof}

Lemma \ref{lemma:m_in_k} shows that sparsity constraints (e.g. locality, delay) on $\K$ will translate to sparsity constraints on $\M$, but not $\R$; directly applying these constraints on $\R$ may be too restrictive. Note that although it is also true that $\K\Rc$ = $\Mc$, both $\Mc$ and $\Rc$ must obey sparsity constraints as they are directly used in the implementation.

\subsection{Locality}
Let $\mathcal{L}(i)$ denote the locality of node $i$. Generally, $\mathcal{L}(i)$ consists of the $l$ closest neighbours of node $i$ in the network. Locality constraints restrict spectral components of $\Rc$ and $\Mc$ (or $\R$ and $\M$) to have nonzero support only over the allowed localities; i.e. 
\begin{equation} \label{eq:locality_constr}
\begin{aligned}
    R_c(k)^{i,j} = 0 \quad \forall j \notin \mathcal{L}(i) \\
    BM_c(k)^{i,j} = 0 \quad \forall j \notin \mathcal{L}(i) \\
\end{aligned}
\end{equation}
where $B$ is the actuation matrix of the system.

For a system with nodes arranged in a chain configuration and $\mathcal{L}(i)$ equal to the $l$ closest neighbours of node $i$, these constraints result in banded diagonal $R_c(k)$ and $M_c(k)$ with a band width of $2l+1$ $\forall k$.

When we apply locality constraints on the implementation matrices as per (\ref{eq:locality_constr}), we enforce that node $i$ will only communicate with nodes in $\mathcal{L}(i)$ for all time. When we apply locality constraints on the closed-loop maps (i.e. replace $R_c$ and $M_c$ in (\ref{eq:locality_constr}) with $\Phi_x$ and $\Phi_u$), we limit how far a disturbance at a node spreads before it is contained. While both are useful, controller locality tends to be a hard constraint that arises from physical limitations in the communication network, while closed-loop locality is a soft constraint that can be relaxed.

\subsection{Delay}
Let $d(i,j)$ denote the delay from node $j$ to node $i$. In general, $d(i,j)$ is proportional to the distance between nodes $i$ and $j$. Delay constraints are like time-varying locality constraints with an expanding locality, where $\mathcal{L}(i)$ at time $k$ contains all nodes $j$ for which $k \geq d(i,j)$. Delay constraints are enforced as follows:
\begin{equation} \label{eq:delay_constr}
\begin{aligned}
    R_c(k)^{i,j} = 0 \quad \forall k < d(i,j) \\
    BM_c(k)^{i,j} = 0 \quad \forall k < d(i,j) \\
\end{aligned}
\end{equation}
where $B$ is the actuation matrix of the system.

For a system in a chain configuration and $d(i,j)$ proportional to inter-nodal distance, these constraints result in banded diagonal $R_c(k)$ and $M_c(k)$, with wider bands for higher values of $k$.

When we apply delay constraints on the implementation matrices as per (\ref{eq:delay_constr}), we are ensuring that controllers do not require information that cannot be communicated to them in time. For example, node $i$ cannot use any information about node $j$ that is more recent than $t - d(i,j)$. When we apply delay constraints on the closed-loop maps (i.e. replace $R_c$ and $M_c$ in (\ref{eq:delay_constr}) with $\Phi_x$ and $\Phi_u$), we limit how fast a disturbance at node $j$ propagates to the state and input at node $i$. As with locality, the controller delay constraint tends to be a hard constraint arising from physical communication limitations. Unlike in the locality case, the closed-loop delay constraint serves no clear purpose; by separating the controller design from the closed-loop design, we avoid imposing this unnecessary constraint on the closed-loop map.

\subsection{Delay and locality as optimization objectives}

We can augment the objective in (\ref{eq:actual_relax}) with the following terms to encourage tolerance for communication delay:
\begin{equation} \label{eq:delay_tolerance}
    \sum_{k=1}^{T_c} \sum_{i=1}^{n} \sum_{j=1}^{n} e^{dist(i,j)-k} (\|R_c(k)^{i,j}\| + \|BM_c(k)^{i,j}\|)
\end{equation}
where $dist(i,j)$ is the distance between nodes $i$ and $j$ in the network.

We can encourage tolerance for communication locality by using similar terms (note the removal of $k$ from the exponential weight):
\begin{equation}
    \sum_{k=1}^{T_c} \sum_{i=1}^{n} \sum_{j=1}^{n} e^{dist(i,j)} (\|R_c(k)^{i,j}\| + \|BM_c(k)^{i,j}\|)
\end{equation}

Again taking the chain configuration as an example, these terms encourage banded-diagonal $R_c(k)$ and $M_c(k)$ with higher penalties on elements farther away from the diagonal. Elements that survive despite heavy penalty represent edges in the network that require fast communication in order to best preserve the desired closed-loop map.
	\section{EXAMPLES} \label{sec:examples}
All subsequent analysis was done on \text{MATLAB} using the \text{cvx} toolbox with \text{SDPT3} on the low precision setting. The optimization was done on a laptop with an Intel i7 processor and 8GB of RAM.

The system we work with is a 10-node chain with the following tridiagonal $A$ matrix:
\begin{equation}
    A = \begin{bmatrix}
    0.6 & 0.4 & 0 & \ldots & \\
    0.4 & 0.2 & \ddots & & \\
    0 & \ddots & \ddots & \ddots & \\
    \vdots & & \ddots & 0.2 & 0.4 \\
     & & & 0.4 & 0.6 \\
    \end{bmatrix}
\end{equation}

The system has three actuators, located at nodes 3, 6, and 10. The system is marginally stable, with a spectral radius of 1. General observations below extend to larger chains with similarly sparse actuation.

\subsection{Low-norm centralized controllers}
We first synthesize a desired closed-loop map via SLS, with no communication or locality constraints. We use an FIR horizon of $T=20$ and an LQR objective. We then synthesize unconstrained controllers using (\ref{eq:actual_relax}) with an additional $\LOne$ regularization term on ($\Rc$, $\Mc$). We synthesize controllers with order ranging from $T_c=2$ to $T_c=25$.

\begin{figure}[h]
\centering
    \includegraphics[width=8.8cm]{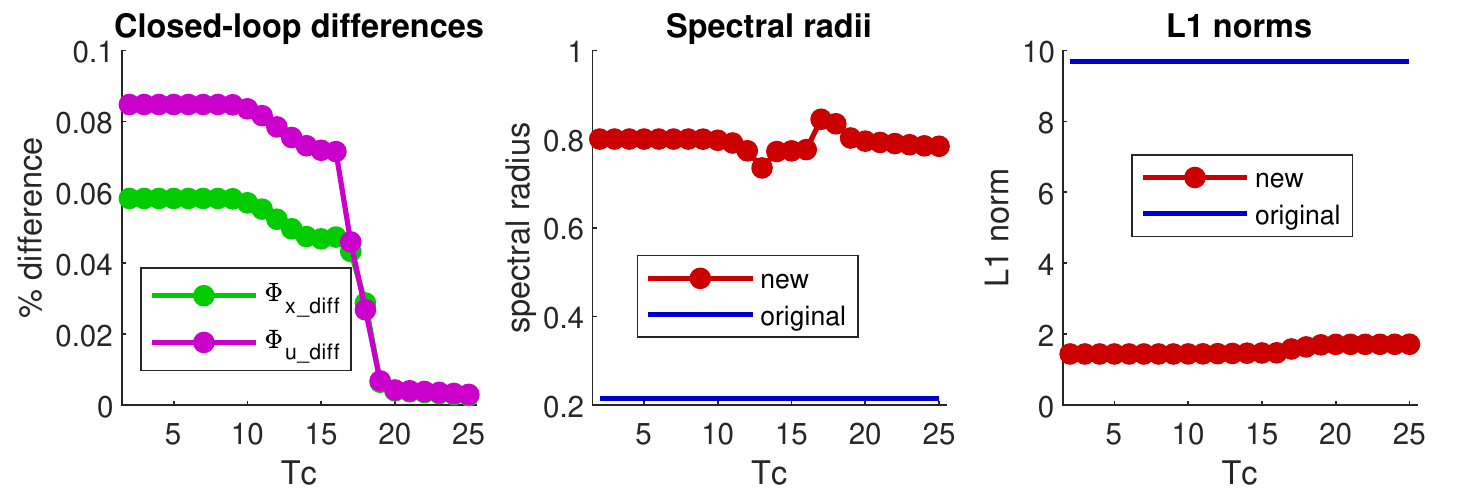}
\caption{Closed-loop differences, spectral radii of internal dynamics, and $\mathcal{L}_1$ norms for controllers with varying $T_c$} \label{fig:example1}
\end{figure}

Fig. \ref{fig:example1} shows the differences between the desired closed loop maps ($\R$, $\M$) and the implemented closed-loop maps ($\Rtilde$, $\Mtilde$), normalized by $\|\R\|$ and $\|\M\|$, respectively. As expected, the closed loop differences decrease with increasing $T_c$. Interestingly, we are able to approximate the system relatively well even for $T_c \ll T$; at $T_c=2$, we are less than 10\% away from the optimal closed-loop map.

Fig. \ref{fig:example1} also shows the spectral radii of $A_z$. The spectral radius of the original controller is far lower than that of the new controllers, suggesting a possible tradeoff between controller norm and internal stability margins. All implementations are internally stable, and spectral radius remains relatively constant over $T_c$.

Lastly, Fig. \ref{fig:example1} shows the $\LOne$ norms of the implementation matrices. All new controllers have significantly lower norm than the original controller, and $\LOne$ norm remains almost constant over $T_c$.

\subsection{Localized LQR controller}
In this example, separating closed-loop synthesis from controller synthesis yields much better results than the original synthesis procedure, in which controller and closed-loop synthesis are coupled.

The objective of this example is to synthesize a controller with an LQR objective and FIR horizon of $T=20$. An SLS formulation of LQR can be found in \cite{Wang2014}. The following constraints must be obeyed: the controller at each node is only allowed to use information from its two neighbouring nodes, and communication speed is restricted to be the same speed as propagation speed.

Directly applying the constraints to the closed-loop map renders the standard SLS problem infeasible (``Constrained CL map'' in Table \ref{table:lqr_tables}); the algorithm cannot find a controller that meets the constraints. We use the virtual localization technique introduced in \cite{Matni2018} to synthesize a controller that meets these constraints (``Virtually local'' in Table \ref{table:lqr_tables}), while relaxing the constraints on the closed-loop map.

We then apply our proposed two-step procedure. First, we synthesize the desired closed-loop maps ($\R$, $\M$) via SLS without communication and locality constraints. We use these closed-loop maps to implement a centralized controller for comparison purposes (``FIR centralized'' in Table \ref{table:lqr_tables}). We then synthesize a controller subject to the communication and locality constraints (``Two-step'' in Table \ref{table:lqr_tables}), using (\ref{eq:actual_relax}) with $\LOne$ regularization. We synthesize one low-order controller with order $T_c=2$, and one full-order controller with $T_c=T$.

For all controllers, we evaluate the LQR cost, spectral radius of the internal dynamics, and $\LOne$ norm of the implementation matrices. The LQR cost is normalized by the optimal infinite horizon LQR cost. Results are shown in Table \ref{table:lqr_tables}.

\begin{table}[htbp]
\caption{Comparison of LQR costs}
\label{table:lqr_tables}
\begin{center}
\begin{tabular}{|l|l|l|l|}
\hline
Controller & LQR cost & Spectral radius & $\LOne$ norm \\
\hline
FIR centralized & 1.001 & 0.214 & 9.688 \\
\hline
Constrained CL map & \multicolumn{3}{c}{\textit{Infeasible}} \vline \\
\hline
Virtually local & 1.294 & 0.847 & 9.704 \\
\hline
Two-step, $T_c=T$ & 1.033 & 0.876 & 1.495 \\
\hline
Two-step, $T_c=2$ & 1.034 & 0.851 & 1.426 \\
\hline
\end{tabular}
\end{center}
\end{table}

In this example, both the full-order and low-order controller (``Two-step'') give an LQR cost increase of about 3$\%$ over the optimal infinite-horizon controller. In contrast, the virtually local controller incurs a cost increase of nearly 30$\%$.

All synthesized controllers are internally stable, with spectral radius less than one. The centralized controller has lower spectral radius than the constrained controllers, which have comparable spectral radii. Additionally, both of our controllers are able to attain an $\LOne$ norm that is very close to the $\LOne$ norm achieved in the previous example, despite much more severe constraints. Overall, our proposed two-step synthesis procedure generates a controller that performs better than the controller generated by existing techniques, without sacrificing internal stability margins.

Interestingly, the low-order controller performs almost as well as the full-order controller, with only 0.1$\%$ performance degradation. This suggests that in this case, highly delayed information (which correspond to higher order terms of the implementation matrices) are not very useful to the controller. 
	\section{CONCLUSIONS AND FUTURE WORK} \label{sec:conclusions}

By separating controller synthesis from closed-loop synthesis, we are able to apply constraints to the controller without unnecessarily limiting the closed-loop map. As demonstrated above, our proposed two-step procedure offers benefits over the original single step procedure. This procedure offers a new perspective on system-level controller design, and an alternative approach for regimes in which standard SLS is infeasible. In future work, we would like to better understand how our method relates to the existing work on virtually localized SLS, and which types of problems each method is better suited to. Additionally, we would like to extend this work to the output feedback case.

Synthesis methods mentioned in this paper can be found in the SLS-MATLAB toolbox at \url{https://github.com/sls-caltech/sls-code}.
	
	\bibliography{library}

\begin{thebibliography}{10}
\providecommand{\url}[1]{#1}
\csname url@samestyle\endcsname
\providecommand{\newblock}{\relax}
\providecommand{\bibinfo}[2]{#2}
\providecommand{\BIBentrySTDinterwordspacing}{\spaceskip=0pt\relax}
\providecommand{\BIBentryALTinterwordstretchfactor}{4}
\providecommand{\BIBentryALTinterwordspacing}{\spaceskip=\fontdimen2\font plus
\BIBentryALTinterwordstretchfactor\fontdimen3\font minus
  \fontdimen4\font\relax}
\providecommand{\BIBforeignlanguage}[2]{{%
\expandafter\ifx\csname l@#1\endcsname\relax
\typeout{** WARNING: IEEEtran.bst: No hyphenation pattern has been}%
\typeout{** loaded for the language `#1'. Using the pattern for}%
\typeout{** the default language instead.}%
\else
\language=\csname l@#1\endcsname
\fi
#2}}
\providecommand{\BIBdecl}{\relax}
\BIBdecl

\bibitem{Ho1971}
Y.~C. Ho and K.~C. Chu, ``{Team Decision Theory and Information Structures in
  Optimal Control Problems-Part I},'' \emph{IEEE Transactions on Automatic
  Control}, vol.~17, no.~1, pp. 15--22, 1971.

\bibitem{Mahajan2012}
A.~Mahajan, N.~C. Martins, M.~C. Rotkowitz, and S.~Yuksel, ``{Information
  structures in optimal decentralized control},'' in \emph{Proceedings of the
  IEEE Conference on Decision and Control}, 2012, pp. 1291--1306.

\bibitem{Rotkowitz2005}
M.~Rotkowitz and S.~Lall, ``{A characterization of convex problems in
  decentralized control},'' \emph{IEEE Transactions on Automatic Control},
  vol.~50, no.~12, pp. 1984--1996, 2005.

\bibitem{Bamieh2002}
B.~Bamieh, F.~Paganini, and M.~A. Dahleh, ``{Distributed control of spatially
  invariant systems},'' \emph{IEEE Transactions on Automatic Control}, vol.~47,
  no.~7, pp. 1091--1107, 2002.

\bibitem{Bamieh2005}
B.~Bamieh and P.~G. Voulgaris, ``{A convex characterization of distributed
  control problems in spatially invariant systems with communication
  constraints},'' \emph{Systems and Control Letters}, vol.~54, no.~6, pp.
  575--583, 2005.

\bibitem{Nayyar2013}
A.~Nayyar, A.~Mahajan, and D.~Teneketzis, ``{Decentralized stochastic control
  with partial history sharing: A common information approach},'' \emph{IEEE
  Transactions on Automatic Control}, vol.~58, no.~7, pp. 1644--1658, 2013.

\bibitem{Anderson2019}
J.~Anderson, J.~C. Doyle, S.~H. Low, and N.~Matni, ``{System level
  synthesis},'' \emph{Annual Reviews in Control}, vol.~47, pp. 364--393, 2019.

\bibitem{Wang2018}
Y.~S. Wang, N.~Matni, and J.~C. Doyle, ``{Separable and Localized System-Level
  Synthesis for Large-Scale Systems},'' \emph{IEEE Transactions on Automatic
  Control}, vol.~63, no.~12, pp. 4234--4249, 2018.

\bibitem{Matni2018}
N.~Matni, Y.~S. Wang, and J.~Anderson, ``{Scalable system level synthesis for
  virtually localizable systems},'' in \emph{Proceedings of the IEEE Conference
  on Decision and Control}, 2018, pp. 3473--3480.

\bibitem{Ho2019}
\BIBentryALTinterwordspacing
D.~Ho and J.~C. Doyle, ``{Scalable Robust Adaptive Control from the System
  Level Perspective},'' 2019. [Online]. Available:
  \url{http://arxiv.org/abs/1904.00077}
\BIBentrySTDinterwordspacing

\bibitem{Wang2014}
Y.~S. Wang, N.~Matni, and J.~C. Doyle, ``{Localized LQR optimal control},'' in
  \emph{Proceedings of the IEEE Conference on Decision and Control}, 2014, pp.
  1661--1668.

\end{thebibliography}
	\bibliographystyle{IEEEtran}
\end{document}